\def\BibTeX{{\rm B\kern-.05em{\sc i\kern-.025em b}\kern-.08em
    T\kern-.1667em\lower.7ex\hbox{E}\kern-.125emX}}
\newcommand{\R}{{\mathbb{R}}}
\newcommand{\bbP}{{\mathbb{P}}}
\newcommand{\bbI}{{\mathbbm{1}}}
\newtheorem{theorem}{Theorem}
\newtheorem*{remark}{Remark}
\begin{document}

% OPTIMAL TRANSPORT FOR SUPER RESOLUTION APPLIED TO ASTRONOMY IMAGING

\title{Optimal Transport for Super Resolution Applied to Astronomy Imaging}

\author{\IEEEauthorblockN{Michael Rawson}
\IEEEauthorblockA{\textit{Department of Mathematics} \\
\textit{University of Maryland at College Park}\\
Maryland, USA \\
rawson@umd.edu}
\and
\IEEEauthorblockN{Jakob Hultgren}
\IEEEauthorblockA{\textit{Department of Mathematics} \\
\textit{University of Maryland at College Park}\\
Maryland, USA \\
hultgren@umd.edu}}

\maketitle

\begin{abstract}
Super resolution is an essential tool in optics, especially on interstellar scales, due to physical laws restricting possible imaging resolution. We propose using optimal transport and entropy for super resolution applications. We prove that the reconstruction is accurate when sparsity is known and noise or distortion is small enough. We prove that the optimizer is stable and robust to noise and perturbations. We compare this method to a state of the art convolutional neural network and get similar results for much less computational cost and greater methodological flexibility.  
\end{abstract}

\begin{IEEEkeywords}
optimal transport, Wasserstein distance, super resolution, compressed sensing, sparse imaging, sparse regularization, sparsity, maximum entropy, convolutional neural network
\end{IEEEkeywords}

% TODO before submit : 

% update bib 

% spell check

\section{Introduction and Background}

\subsection{Super Resolution}

Super resolution seeks to improve image resolution without further data collection. This is useful when important features or pixels are missing. Improving the measurement device, such as a camera or telescope, will improve the image resolution but only up to limits governed by physical laws, for example the diffraction limit. Super resolution can increase image resolution beyond this point given constraints that give a well-posed inverse problem. The most common constraint is that the true image is either sparse or smooth in some basis. For example, Gaussian noise is a common input that blurs important features. Removing additive Gaussian noise can be done, imperfectly, by solving an inverse problem that constrains total variation and hence enforces smoothness \cite{ng}. 

A more general solution is to minimize with respect to a regularizing term that maximizes sparsity. Compressed sensing methods often minimize an objective function 
involving the $L^1$ norm of the solution \cite{candes}. Minimizing $L_0$ maximizes sparsity but $L^1$ is usually used instead for its convex properties. Another regularizer that maximizes sparsity is entropy \cite{gull, cornwell1985}. Neural networks or deep learning has more recently been used for inverse problems, especially on images \cite{ongie}.

\subsection{Optimal Transport}
Optimal transport has a long history in pure and applied math, and has been used recently in the field of imaging \cite{cuturi}. One starts with a source distribution, $\mu\in\bbP(X)$, target distribution, $\nu\in\bbP(Y)$, and a cost $C:X \times Y \rightarrow \R$ on spaces $X$ and $Y$. In the discrete setting relevant to scientific computing (where $X$ and $Y$ are finite sets), distributions are represented as finite dimensional vectors.
Given two positive vectors of $L^1$-norm equal 1, $\mu\in\R^n_{>0}, \nu \in \R^m_{>0}$, together with a cost $C$ represented as an $n\times m$ matrix $(C_{ij})$, the optimal transport plan between $\mu$ and $\nu$ is 
\begin{equation} \label{eq:OTPlan} 
    \arg\min_{P\in \Pi(\mu,\nu)} \sum_{i=1}^n \sum_{j=1}^m C_{ij} P_{ij} 
\end{equation}
where $\Pi(\mu,\nu)$ is the set of transport plans from $\mu$ to $\nu$:
$$ \Pi(\mu,\nu) = \{P\in\R^{n\times m}: \sum_{i=1}^n P_{ij} = \nu_j\ \forall j, \sum_{j=1}^m P_{ij} = \mu_i\ \forall i\}. $$
When $m=n$ and $C$ defines a metric on $\{1,\ldots,n\}$ (i.e. $d(i,j) := C_{ij}$ is symmetric, non-degenerate, and satisfies the triangle inequality) then the minimum value
\begin{equation} \label{eq:Wdist} 
    d_W(\mu,\nu) := \min_{P\in \Pi(\mu,\nu)} \sum_{i=1}^n \sum_{j=1}^m C_{ij} P_{ij} 
\end{equation}
defines a metric on $\bbP(\{1,\ldots,n\})$ often referred to as the 1st Wasserstein metric or just the Wasserstein metric \cite{villani}.

\subsection{Entropy}\label{sec:Entropy}
For a probability mass function $p:\mathcal J\rightarrow \R$ (non-negative and sums to 1), we will use $H(p)$ to denote its entropy,
\begin{equation} 
    H(p) = -\sum_{\iota \in \mathcal J} p(\iota) \ln(p(\iota)) \label{eq:Entropy}. 
\end{equation}
As we represent probability densities with vectors and matrices, we will consider the indices to be in the domain $\mathcal J$.
In \eqref{eq:Entropy}, we use the convention that $0\cdot\ln(0) = 0$. With this convention, $H$ defines a continuous non-negative function on the probability simplex, differentiable in the interior of the probability simplex and with a derivative unbounded at the boundary. The key idea is that sparse arrays have low entropy.

\subsection{The Sinkhorn algorithm}\label{sec:Sinkhorn}

The distance $d_W$ in Equation \eqref{eq:Wdist} 
can be computed exactly using methods from linear programming. However, for large $n$ the quickly growing computation times excludes this from many applications \cite{cuturi}. In applications involving large data sets, $d_W$ is often replaced by its \emph{entropic regularization}, which is more feasible from a computational perspective \cite{peyre}. Given a small constant $\epsilon>0$, the $\epsilon$-regularized distance between $\mu,\nu\in \bbP(\{1,\ldots,n\})$ is 
\begin{equation} \label{eq:Wepsilon}
    d_W^\epsilon(\mu,\nu) := \sum_{i=1}^n \sum_{j=1}^m C_{ij}P^*_{ij}
\end{equation}
where 
\begin{equation}
    P^* = \arg\min_{P\in \Pi(\mu,\nu)} \sum_{i=1}^n \sum_{j=1}^m C_{ij}P_{ij} - \epsilon H(P).
    \label{eq:WepsilonObj}
\end{equation}
The regularized objective function in \eqref{eq:WepsilonObj} is strictly convex and proper, hence always admits a unique minimizer. While the minimizer in the true optimal transport problem is usually very sparse, the entropy term in \eqref{eq:WepsilonObj} pushes the minimizer away from the boundary of the unit simplex, producing a less sparse minimizer. As $\epsilon\rightarrow 0$, this minimizer converges to a minimizer of \eqref{eq:Wdist} (the minimizer with highest entropy if there are more than one), and $d_W^\epsilon$ converges to $d_W$ \cite{peyre}. 

A simple application of Lagrange multipliers show that the minimizer of \eqref{eq:WepsilonObj} is the unique element in $\Pi(\mu,\nu)$ on the form
\begin{equation} P_{ij} = \sum_{i=1}^n \sum_{j=1}^m f_i e^{-C_{ij}/\epsilon} g_j \label{eq:ArgmindWepsilon} \end{equation}
for some unknown positive multipliers $f=(f_1,\ldots,f_n),\ g=(g_1,\ldots g_n)$ \cite{peyre}. Determining $f$ and $g$ from $\mu, \nu$ and the matrix $(e^{-C_{ij}/\epsilon})$ is known as the matrix scaling problem, and a standard algorithm to find approximate solutions is the iterative proportional fitting procedure, also known as the Sinkhorn Algorithm \cite{peyre}. The matrix \eqref{eq:ArgmindWepsilon} lies in $\Pi(\mu,\nu)$ if $\sum_i P_{ij} = \mu_i$ for all $i$ and $\sum_j P_{ij} = \nu_j$ for all $j$. The Sinkhorn algorithm proceeds iteratively, alternating between updating $f$ so that the first of these conditions is satisfied and updating  $g$ so that the second of these conditions is satisfied (see Algorithm~\ref{algo:Sinkhorn}).

\subsection{Wasserstein Distance Gradient}
The multipliers $f\in\R^n$ and $g\in\R^m$ in the previous section can be thought of as dual variables for the optimization problem \eqref{eq:WepsilonObj}. More precisely, 
$$ F = \epsilon \ln(f), \; G = \epsilon \ln(g) $$ 
are the Lagrange multipliers for \eqref{eq:Wepsilon} \cite{cuturi}. When considering the regularized distance between $\mu$ and $\nu$ as a function of $\mu$ (keeping $\nu$ fixed) this can, at least for positive $\mu$ and $\nu$, be exploited to approximate its gradient (see for example \cite{frogner, lellman}). The Sinkhorn Algorithm, used to approximate $d_\mu^\epsilon(\mu,\nu)$ and its gradient with respect to $\mu$ is summarized in Algorithm~\ref{algo:Sinkhorn}. Note that the output $F$ and $G$ of Algorithm~\ref{algo:Sinkhorn} needs to be projected onto the tangent space of the probability simplex to yield an approximation of the true gradients. 

\begin{algorithm}
    \textbf{Input:} 
    
    \quad $\mu, \nu \in \R^n$ : positive probability vectors 

    \quad $C\in\R^{n \times n}$ : cost matrix 
    
    \quad $\epsilon$ : positive regularization parameter
    
    \textbf{Output:}
    
    \quad $d_W^\epsilon \in \R$ : regularized distance between $\mu$ and $\nu$
    
    \quad $F \in \R^n$ : gradient of $d_W^\epsilon(\mu,\nu)$ with resp. to $\mu$ at $\mu$
    
    \quad $G \in \R^n$ : gradient of $d_W^\epsilon(\mu,\nu)$ with resp. to $\nu$ at $\nu$ 

    \textbf{Begin:} 

    $f = (1,\ldots,1)\in \R^n$
    
    $g = (1,\ldots,1)\in \R^n$
    
	\While{$f$ and $g$ have not converged}
 	{
 	    \For{$1 \le i \le n$}{
     	$f_i = \mu_i / \left(\sum_j \exp(-C_{ij}/\epsilon) g_j\right)$ 
     	}
     	\For{$1 \le j \le n$}{
     	$g_j = \nu_j / \left(\sum_i \exp(-C_{ij}/\epsilon) f_i\right)$ 
     	}
  	}
  	
    $d_W^\epsilon = \sum_{i=1}^n \sum_{j=1}^m f_i g_j \exp(-C_{ij} /\epsilon) C_{ij}$ 
  	
  	$F = -\epsilon \ln(f)$
  	
  	$G = -\epsilon \ln(g)$\\
  	
	\caption{The Sinkhorn Algorithm for Regularized Optimal Transport Distances}
	\label{algo:Sinkhorn}
\end{algorithm}

\begin{remark}
In Algorithm~\ref{algo:Sinkhorn}, if one is only interested the regularized distance $d_W^\epsilon$, the assumption of positivity of $\mu$ and $\nu$ can be relaxed to non-negativity. 
However, reflecting the fact that the entropy term in \eqref{eq:Wepsilon} pushes the minimizer away from the boundary, any entry $F_i$ in $F$ will return as $+\infty$ if $\mu_i=0$. 
\end{remark}

\section{Wasserstein Inverse Problem for Super Resolution}

We propose two super resolution inverse problems that produce sparse solutions which are near to the measurement in Wasserstein distance. For a measurement $\nu$ and positive regularization parameters $\lambda$ and $\lambda'$, we define the \emph{sparse approximation} of $\nu$ as a minimizer 
\begin{equation}
    \mu_* = \arg\min_{\mu \in \bbP(X)} d_W^\epsilon(\mu,\nu) + \lambda H(\mu).
    \label{eq:sparseNeigh}
\end{equation}
and the \emph{sparse retrieval} of $\nu$ as a minimizer 
\begin{equation}
    \mu_* = \arg\min_{\mu: d_W(\mu,\nu)<\lambda'} H(\mu).
    \label{eq:sparseRetriev}
\end{equation}
Problem \eqref{eq:sparseNeigh} and \eqref{eq:sparseRetriev} are essentially dual, and for generic data $\nu$ there is a mapping $\lambda \mapsto \lambda'(\lambda)$ such that $\mu$ is a solution to \eqref{eq:sparseNeigh} if and only if $\mu$ is a solution to \eqref{eq:sparseRetriev}. We will approach \eqref{eq:sparseRetriev} from a theoretical perspective in Section~\ref{sec:MainResults} but use \eqref{eq:sparseNeigh} in our application since it fits well into a gradient descent method. 

At least one minimizer exists by compactness of the finite dimensional probability simplex. The entropy term in \eqref{eq:sparseNeigh} favors sparse solutions. Naturally, there is a trade-off between sparsity of the solution and proximity to the measurement. How these two objectives are prioritized is governed by $\lambda$. For $\lambda=0$, no priority is given to the goal of sparsity and $\mu_* = \nu$. As $\lambda$ increases, $\mu_*$ turns into an increasingly sparse approximation of $\nu$ and when $\lambda \rightarrow \infty$, $\|\mu_*\|_0 \rightarrow 1$.  

This inverse problem is useful whenever there is a natural distance, or cost function, on the index set of $\nu$. If, for example, $\nu$ is given in Fourier space and each entry $\mu_i$ corresponds to a frequency $\sigma_i$, then two natural choices for the cost $C_{ij}$ are $C_{ij} = |\sigma_i-\sigma_j|$ and $C_{ij} = |\ln(\sigma_i/\sigma_j)|$. In the application we describe below, each entry in $\nu$ describes the intensity of a pixel in a $32\times 32$ image and $C_{ij}$ is chosen as the $L^2$-distance between the $i^{th}$ pixel and $j^{th}$ pixel.

\begin{remark}
This method can be contrasted to maximum entropy methods in statistical physics, where the probability distribution with highest entropy (under constraints dictated by observations) is chosen as the best representative of the current state of knowledge about a system. In our context, we work with the crucial assumption of sparsity, which motivates minimizing the entropy instead of maximizing it. 
\end{remark}

\section{Main Results}\label{sec:MainResults}
We will let $\nu\in P(\{1,\ldots,n\})$ be a sparse signal (i.e. $\|\nu\|_0<n$ is small), and use $\bar\nu$ to denote this signal with noise and distortion. In our application, we are interested in determining the support of $\nu$ (i.e. the indices of all non-zero entries in $\nu$) from $\bar\nu$. For two probability vectors $\mu$ and $\nu$ we will say that $\mu$ identifies the structure of $\nu$ if they have the same support, i.e. if $\mu_i>0$ if and only if $\nu_i>0$ for all $i$. Our main theorem (Theorem~\ref{thm1} below) shows that the minimizer of \eqref{eq:sparseNeigh} identifies the structure of $\nu$ under the assumptions that $\nu$ is sparse and the noisy signal $\bar\nu$ is close to $\nu$ in optimal transport distance. As is indicated by Theorem~\ref{thm:NoiseBound} below, the latter assumption is natural when dealing with Gaussian noise since the optimal transport distance, unlike total variation and $L^p$ distances, take the geometry of the space into account. 

\subsection{Optimal Transport Bound on Gaussian Noise}
Let the probability distribution $\nu:=\frac{1}{k}\sum_{i=1}^k \delta_{p_i}$ be a sparse signal in $\bbP(\R^d)$ where $\delta$ is the Dirac delta. Assume the noisy signal $\tilde\nu$ is produced in the following way: For each $p_i$ in the sum above, we sample $n$ points $x_i^1,\ldots,x_i^n$ in $\R^d$ according to a normal distribution centered at $p_i$ with independent components of variance $\sigma^2$. Let $N=kn$ be the number of points sampled and $\tilde\nu=\frac{1}{N}\sum_{i=1}^n \sum_{j=1}^m\delta_{x_i^j}$ be the noisy signal. 
\begin{theorem}
\label{thm:NoiseBound}
    Given a sparse signal $\nu\in\bbP(\R^d)$ giving rise to a noisy signal $\tilde\nu$ as described above, the optimal transport distance between $\nu$ and $\tilde\nu$ is bounded by $\frac{\sigma^2}{N}X_N$ where $X_N$ is a random variable with distribution $\chi^2_{dN}$. In particular, the expected value and variance of $\frac{\sigma^2}{N}X_N$ are $d\sigma^2$ and $2d\sigma^4/N$, respectively. 
\end{theorem}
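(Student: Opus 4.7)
The plan is to exhibit an explicit transport plan between $\nu$ and $\tilde\nu$, compute its cost, and observe that this cost is a chi-squared random variable. Because an optimal transport distance equals the minimum cost over all admissible plans, any specific plan immediately yields an upper bound. Given the form of the claimed bound (which has the dimension of $\sigma^{2}$ rather than $\sigma$), I read the cost as the squared Euclidean distance $C(x,y)=\|x-y\|^{2}$ on $\R^{d}$, and I would state this explicitly at the start of the proof.

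The natural plan is the one that keeps track of which Gaussian sample came from which $p_{i}$. Concretely, for each $i\in\{1,\ldots,k\}$ and each $j\in\{1,\ldots,n\}$, transport mass $\tfrac{1}{N}=\tfrac{1}{kn}$ from $p_{i}$ to $x_{i}^{j}$. First I would check the marginals: the total mass leaving $p_{i}$ is $n\cdot\tfrac{1}{N}=\tfrac{1}{k}$, which matches $\nu$, and the mass arriving at each $x_{i}^{j}$ is $\tfrac{1}{N}$, which matches $\tilde\nu$. Thus this plan lies in $\Pi(\nu,\tilde\nu)$ and its cost is an upper bound on $d_{W}(\nu,\tilde\nu)$:
\begin{equation*}
    d_{W}(\nu,\tilde\nu)\;\le\;\sum_{i=1}^{k}\sum_{j=1}^{n}\frac{1}{N}\,\bigl\|p_{i}-x_{i}^{j}\bigr\|^{2}.
\end{equation*}

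Next I would analyze the distribution of the right-hand side. By construction, $x_{i}^{j}-p_{i}$ is a $d$-dimensional Gaussian with independent $N(0,\sigma^{2})$ components, so $\|x_{i}^{j}-p_{i}\|^{2}/\sigma^{2}$ is $\chi^{2}_{d}$-distributed. Independence of the samples $x_{i}^{j}$ across all $(i,j)$ makes these $N$ random variables jointly independent, and the sum of $N$ independent $\chi^{2}_{d}$ variables is $\chi^{2}_{dN}$. Setting
\begin{equation*}
    X_{N}\;:=\;\sum_{i=1}^{k}\sum_{j=1}^{n}\frac{\|p_{i}-x_{i}^{j}\|^{2}}{\sigma^{2}}\;\sim\;\chi^{2}_{dN},
\end{equation*}
the bound above becomes $d_{W}(\nu,\tilde\nu)\le \tfrac{\sigma^{2}}{N}X_{N}$. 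The final statements about mean and variance then follow from the standard facts $\mathbb{E}[\chi^{2}_{dN}]=dN$ and $\mathrm{Var}[\chi^{2}_{dN}]=2dN$, giving expectation $d\sigma^{2}$ and variance $2d\sigma^{4}/N$ after the $\sigma^{2}/N$ scaling.

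There is no hard step; the main conceptual point is simply choosing the labelled coupling that remembers the ancestor $p_{i}$ of each sample, after which the cost decouples into a sum of independent squared Gaussian norms. The only mild subtlety I would flag in the write-up is the cost convention: the $\chi^{2}$ scaling forces $C$ to be the squared Euclidean distance, so the ``optimal transport distance'' in the theorem should be understood as the squared 2-Wasserstein cost rather than the 1-Wasserstein metric discussed in Section~I.
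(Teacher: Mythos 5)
Your proof follows exactly the same route as the paper's: bound the optimal transport cost by the explicit plan sending each $x_i^j$ back to its ancestor $p_i$, and observe that the resulting cost $\frac{1}{N}\sum\|x_i^j-p_i\|^2$ is $\frac{\sigma^2}{N}$ times a $\chi^2_{dN}$ variable. Your write-up is more careful than the paper's (checking the marginals explicitly and flagging that the cost must be the squared Euclidean distance for the $\chi^2$ scaling to come out right, a convention the paper leaves implicit), but the argument is the same.
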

\begin{proof}
    The optimal transport cost is bounded from above by the cost of the transport plan sending each $x_i^j$ to $p_i$. The cost of this plan is 
    $\frac{1}{N}\sum |x_i^j-p_i|^2 $. By assumption, each term in this sum is the squared sum of $d$ normal distributed random variables with mean 0 and variance $\sigma^2$. 
\end{proof}

\subsection{Reconstructing the Support of a Sparse Signal}
\begin{theorem}
    \label{thm1}
    Assume $\nu$ is a sparse signal and $\bar\nu$ is a noisy signal such that 
    $d_w(\nu,\bar\nu) < \delta $. Then the solution of 
    \begin{equation} \label{eq:dualConstraint} \mu=\arg\min_{\mu : d_W(\bar\nu,\mu) \le \delta} H(\mu) \end{equation} 
    will identify the structure of $\mu$, i.e. have the same support as $\mu$, if $||\nu||_0 \leq ||\mu||_0$ for all $\mu$ such that $d_W(\mu,\bar\nu)<2\delta$, with equality only if $\mu$ and $\nu$ has the same support.  
\end{theorem}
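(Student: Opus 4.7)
Let me write out the plan. Denoting the minimizer by $\mu^*$, my strategy is to squeeze $\|\mu^*\|_0$ between a lower bound coming from the hypothesis and an upper bound coming from the entropy minimization, and then invoke the equality clause of the hypothesis to match supports. The first easy observation is that $\nu$ itself satisfies $d_W(\nu,\bar\nu)<\delta$, so $\nu$ lies in the feasible set of \eqref{eq:dualConstraint}, which gives $H(\mu^*)\le H(\nu)$. The second easy observation is that $d_W(\mu^*,\bar\nu)\le\delta<2\delta$, so $\mu^*$ lies in the open $2\delta$-ball around $\bar\nu$ where the hypothesis applies, yielding $\|\mu^*\|_0\ge\|\nu\|_0$.

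The heart of the argument is then to rule out strict inequality $\|\mu^*\|_0>\|\nu\|_0$. I would argue by contradiction: assuming strict inequality, I would try to exhibit a competitor $\mu'$ with $d_W(\mu',\bar\nu)<2\delta$ and $\|\mu'\|_0<\|\nu\|_0$, directly contradicting the hypothesis. A natural candidate is a sparsified version of $\mu^*$, obtained by zeroing out sufficiently small components and redistributing their mass onto nearby indices already in $\mathrm{supp}(\mu^*)$; the triangle inequality $d_W(\mu',\bar\nu)\le d_W(\mu',\mu^*)+d_W(\mu^*,\bar\nu)$ would then certify that $\mu'$ is in the $2\delta$-ball provided the redistribution cost can be kept below $\delta$. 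An alternative route exploits that $\mu\mapsto d_W(\mu,\bar\nu)$ is piecewise linear and convex, so the feasible set is a polytope; since $H$ is strictly concave, its minimum is achieved at a vertex of this polytope, and a linear-programming-style vertex-counting argument should bound $\|\mu^*\|_0$ by the number of active Wasserstein facets plus one. Once $\|\mu^*\|_0=\|\nu\|_0$ is established, the equality clause of the hypothesis immediately delivers $\mathrm{supp}(\mu^*)=\mathrm{supp}(\nu)$.

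The main obstacle is making either route rigorous. For the sparsification approach, the Wasserstein cost $d_W(\mu',\mu^*)$ of the proposed redistribution is a priori unconstrained, and controlling it requires identifying which components of $\mu^*$ can be removed cheaply; this information would have to be extracted from first-order optimality, which is complicated by the fact that $\partial H/\partial\mu_i\to+\infty$ as $\mu_i\to 0^+$, so the KKT conditions at the boundary of the simplex degenerate. For the vertex-counting route, the difficulty is that the active facets of the Wasserstein sublevel set at a generic $\mu^*$ are not easy to enumerate, and translating a bound on the number of non-zero components into the specific value $\|\nu\|_0$ seems to genuinely require the structural hypothesis on $\bar\nu$, not merely abstract polytope geometry.
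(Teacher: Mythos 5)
Your two opening observations---that $\nu$ is feasible so $H(\mu_*)\le H(\nu)$, and that every feasible point satisfies $d_W(\mu,\bar\nu)\le\delta<2\delta$ so the hypothesis gives $\|\mu_*\|_0\ge\|\nu\|_0$---are in fact the \emph{entirety} of the argument the paper gives: its proof consists of the triangle-inequality containment plus the remark that $\nu$ is feasible, and then simply asserts that any solution ``has to be $\nu$ or have the same support as $\nu$.'' The step you correctly single out as the heart of the matter, ruling out $\|\mu_*\|_0>\|\nu\|_0$, is precisely the step the paper does not supply. So the gap you identify is genuine, and your instinct that it requires real work is sound; you should just be aware that you are not missing an argument that the paper contains.

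Moreover, the squeeze as you have set it up cannot close, because minimizing entropy is not the same as minimizing support: $H(\mu)\le H(\nu)$ does not imply $\|\mu\|_0\le\|\nu\|_0$. Concretely, $\nu=(1/2,1/2,0,\ldots)$ has $H(\nu)=\ln 2$, while $\mu=(1-2t,t,t,0,\ldots)$ has strictly larger support yet entropy tending to $0$ as $t\to 0^+$; so the upper half of your squeeze gives no upper bound on $\|\mu_*\|_0$ whenever $\|\nu\|_0\ge 2$. (The one case where it does close is $\|\nu\|_0=1$, since then $H(\mu_*)\le 0$ forces $\mu_*$ to be a point mass.) If the objective in \eqref{eq:dualConstraint} were $\|\cdot\|_0$ rather than $H$, your squeeze would finish immediately: $\|\mu_*\|_0\le\|\nu\|_0$ by feasibility of $\nu$, the reverse inequality by hypothesis, and the equality clause identifies the supports; this is very likely the argument implicitly intended. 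As for your two repair routes: the concave-minimization observation correctly places $\mu_*$ at an extreme point of the polytope obtained by intersecting the Wasserstein ball with the probability simplex, but such extreme points are cut out by active facets of the ball and need not have support of size $\|\nu\|_0$; and the sparsification route founders exactly where you say it does, since there is no a priori control on the transport cost of removing small components. Neither route, nor the paper's own proof, establishes the theorem as stated with $H$ as the objective.
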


\begin{remark}
    The conditions in Theorem~\ref{thm1} can be summarized as a low enough noise level $\delta$ and enough sparsity of the true signal $\nu$ (making it a local minimizer of the $L^0$-norm). It is interesting to note that these conditions are essentially necessary: if the inequality in Theorem~\ref{thm1} is violated by some $\mu$ closer than $\delta$ to $\bar\nu$, then the solution of \eqref{eq:dualConstraint} does not identify the structure of $\nu$. 
\end{remark}

\begin{remark}
    Noise is high entropy, hence it is expected that the noise can be removed by minimizing the entropy. However, if the signal-to-noise ratio is too low, this reconstruction is underdetermined.
\end{remark}

\begin{proof}[Proof of Theorem \ref{thm1}]
    By the triangle inequality, the feasible set in \eqref{eq:dualConstraint} is contained in the ball centered at $\bar\nu$ of radius $2\delta$. As the feasible set in \eqref{eq:dualConstraint} contains $\nu$, this means any solution of \eqref{eq:dualConstraint} has to be $\nu$ or have the same support as $\nu$.
\end{proof}

\begin{theorem}
    Fix a positive probability vector $\nu\in\R^d_{>0}$ such that all elements of $\nu$ are distinct. 
    Then the sparse recovery is continuous to perturbations around $\nu$ for small $\lambda$, i.e. for every $\epsilon'>0$ there exists $\delta>0$, such that 
    if $d_W(\nu,\nu')<\delta$,
    
    $\mu_* = \arg\min_{\mu \in \bbP(X) : d_W(\mu, \nu)<\lambda} H(\mu)$,
    and 
    
    $\mu_*' = \arg\min_{\mu \in \bbP(X) : d_W(\mu, \nu')<\lambda} H(\mu)$
    then $\|\mu_*-\mu_*'\| < \epsilon'.$
    \label{thm:robust}
\end{theorem}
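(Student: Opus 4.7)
The plan is to apply a standard parametric-optimization stability argument: under continuity of the feasible-set correspondence and uniqueness of the minimizer at $\nu$, the argmin depends continuously on the parameter. The proof reduces to verifying these two ingredients.

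First, I would establish continuity of the feasible-set map $\eta \mapsto K_\lambda(\eta) := \{\mu : d_W(\mu,\eta) \leq \lambda\}$ in the Hausdorff metric. Upper hemicontinuity is immediate from the triangle inequality: if $d_W(\nu,\nu') < \delta$ then $K_\lambda(\nu) \subseteq K_{\lambda+\delta}(\nu')$. For lower hemicontinuity, given $\mu \in K_\lambda(\nu)$ I would consider the convex combination $\mu_t = (1-t)\mu + t\nu$. Since $d_W(\cdot,\nu)$ is convex (glue optimal plans by convex combination), $d_W(\mu_t,\nu) \leq (1-t)\lambda$, so $d_W(\mu_t,\nu') \leq (1-t)\lambda + \delta \leq \lambda$ whenever $\delta \leq t\lambda$; as $t \to 0$ this produces approximants $\mu_t \in K_\lambda(\nu')$ converging to $\mu$.

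Second, I would argue that the minimizer of $H$ over $K_\lambda(\nu)$ is unique when $\lambda$ is small. Since $\nu$ lies in the interior of the simplex with distinct entries, $H$ is smooth and strictly concave near $\nu$, and $\nabla H(\nu) = -(\ln \nu_i + 1)_i$ has pairwise distinct components. The feasible set $K_\lambda(\nu)$ is a convex polytope (the projection of the Wasserstein LP feasible set onto the marginal coordinates), so the minimum of the concave $H$ is attained at a vertex; for $\lambda$ small, a first-order expansion of $H$ at $\nu$ reduces the problem to minimizing the linear functional $\nabla H(\nu)\cdot\mu$ over this polytope. The distinctness of the components of $\nabla H(\nu)$, together with the facet structure of $K_\lambda(\nu)$, rules out ties and yields a unique optimizing vertex.

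With these ingredients, I would conclude by a sequential compactness argument. Suppose for contradiction that there exist $\epsilon'>0$ and a sequence $\nu_n \to \nu$ in $d_W$ with corresponding minimizers $\mu_n$ satisfying $\|\mu_n - \mu_*\| \geq \epsilon'$. By compactness of the simplex, pass to a subsequence $\mu_n \to \tilde\mu$, which lies in $K_\lambda(\nu)$ by upper hemicontinuity. Applying lower hemicontinuity to $\mu_* \in K_\lambda(\nu)$, for each small $t$ the point $\mu_t = (1-t)\mu_* + t\nu$ lies in $K_\lambda(\nu_n)$ for all large $n$, so $H(\mu_n) \leq H(\mu_t)$; sending $n\to\infty$ and then $t\to 0$ yields $H(\tilde\mu) \leq H(\mu_*)$, and uniqueness forces $\tilde\mu = \mu_*$, a contradiction. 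The hardest part will be the uniqueness claim: strict concavity alone does not rule out multiple minimizing vertices, and the genericity encoded in the distinct-components hypothesis (together with the smallness of $\lambda$ so that linearization is accurate) is exactly what is needed to avoid accidental ties across vertices of the Wasserstein polytope.
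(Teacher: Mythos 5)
Your overall strategy matches the paper's: the paper's proof is a two-sentence sketch asserting (i) that the distinctness assumption gives uniqueness of the minimizer for small $\lambda$ and (ii) that continuity of the argmin then follows, and your proposal is a legitimate, much more detailed implementation of exactly that plan (Hausdorff continuity of the feasible-set correspondence, uniqueness at $\nu$, then a sequential compactness argument in the style of Berge's maximum theorem). Your first and third ingredients are sound: the convexity of $d_W(\cdot,\nu)$ via interpolation of transport plans gives lower hemicontinuity correctly, and the limiting argument $H(\tilde\mu)\le H(\mu_t)\to H(\mu_*)$ is standard and airtight given uniqueness.

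The one genuine gap is the uniqueness step, and you are right to flag it as the hard part --- but your proposed mechanism does not quite close it. For small $\lambda$ the Wasserstein ball $K_\lambda(\nu)$ is, to first order, $\nu$ plus $\lambda$ times the unit ball of the dual Lipschitz norm on the tangent space of the simplex, whose extreme points are $\pm(e_i-e_j)/C_{ij}$. Minimizing the linearization $\nabla H(\nu)\cdot\mu$ over the vertices therefore amounts to minimizing $(g_i-g_j)/C_{ij}$ over pairs $i\neq j$, where $g=\nabla H(\nu)=-(\ln\nu_i+1)_i$. Distinctness of the $\nu_i$ guarantees the $g_i$ are distinct, hence each difference $g_i-g_j$ is nonzero, but it does not prevent two different pairs from achieving the same ratio (e.g.\ $g=(1,2,4)$ with $C_{12}=1$, $C_{13}=3$ gives a tie), so ``distinct components plus facet structure'' does not by itself rule out multiple minimizing vertices; one needs genericity jointly in $\nu$ and the cost $C$, or a second-order tie-breaking argument. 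To be fair, the paper asserts the same uniqueness claim with no justification at all, so this is a gap you share with (and have localized more precisely than) the original proof rather than an error you introduced.
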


\begin{proof}
    The assumption on $\nu$ guarantees that minimizers are unique for small $\lambda$. Continuity of the minimizer then follows from smoothness of $H$. 
\end{proof}

\begin{algorithm}[h]
    \textbf{Input:} 
    
    \quad $X \in \R^{N \times m \times m}$ : $N$ images size $m \times m$

    \quad $\lambda \in \R$ : positive noise level 

    \quad $0 < \epsilon < 1$ : optimal transportation regularization 

    \quad $C\in\R^{m^2 \times m^2}$ : cost matrix 

    \quad $ J_{\lambda,\epsilon}(x,v) := d_W^\epsilon(x,v) + \lambda H(v) $

    \textbf{Output:}
    
    \quad $K \in \R^N$ : star cluster classification
    
    \textbf{Begin:}
    
    $K = 0$
    
	\For{$i=1,2,...,N$}
 	{
% $ V_i = \arg\min_{v} d_W^\epsilon(X_n,v) + \lambda H(v)$\\
        $ v = X_n;\ w = 1 $ \\
    	\While{ $v$ has not converged }
 	    {
$w = \nabla d_W^\epsilon(X_n,\cdot)|_v + \lambda \nabla H|_v $ \\
$ w = w - \langle w, \frac{1}{m} \bbI \rangle \cdot \frac{1}{m} \bbI $ \\
$ \alpha = \sup\{\alpha \in \R : J_{\lambda,\epsilon}(v) > J_{\lambda,\epsilon}(v - \alpha w) \}$ \\
            $ \alpha = \min\{0.01, \alpha\}$\\
 	        $ v = v - \alpha w $;
 	        $ v = diag(\bbI_{v > 0}) \ v $;
 	        $ v = v/\|v\|_1 $
 	    }
 	    $V_i = v$;
     	$\delta = \max V_i$
     	
        \If{ $rank(H_0(V_i^{-1}([0.75 \delta,\ \delta]))) == 1$ }{ 
            $ K_i = 1 $ 
        }
  	}
	\caption{Optimal Transport Star Cluster Prediction}
	\label{algo:OT}
\end{algorithm}

\section{Minimizing the objective}

We solve \eqref{eq:sparseNeigh} using a gradient descent method with variable step size. More precisely, letting $J(\mu) := d_W^\epsilon(\mu,\nu) + \lambda H(\mu)$ be the objective we set the step size to $\alpha_* := \sup\{\alpha>0 : J(\mu-\alpha \nabla J|_\mu) < J(\mu)\}$. 
As mentioned in Section~\ref{sec:Entropy}, the entropy is not differentiable on the boundary of the probability simplex. 
An effect of this is that the output $F$ in Algorithm~\ref{algo:Sinkhorn} is infinity in all indices where $\mu$ is zero. 
We circumvent this problem by defining the $i$'th entry in the gradient of $J$ to be $0$ whenever $\mu_i=0$. Geometrically, this means that whenever the algorithm reaches a sub-simplex of the probability simplex, it ignores the component of the gradient orthogonal to this sub-simplex, thus remaining in this sub-simplex for the rest of the algorithm. 
Gradient descent will converge to a local minimum on the compact probability simplex since the objective is smooth when restricted to the local simplex face. Algorithm \ref{algo:OT} contains the pseudocode and also makes the star cluster classification described in Section \ref{sec:Star}.

\section{Simulation}

\begin{figure}[h]
\centerline{\includegraphics[width=.5\textwidth]{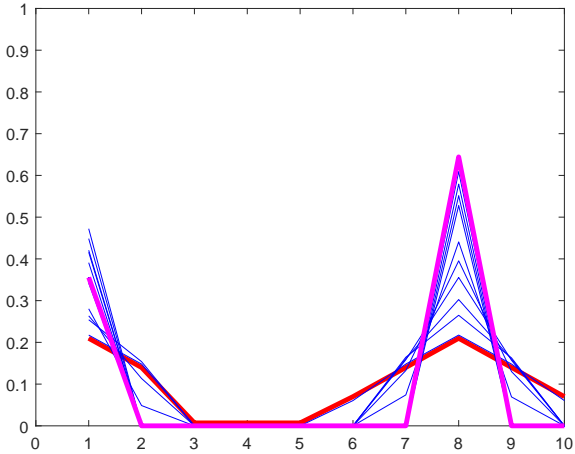}}
\caption{Plot of super resolution O.T. method Algorithm \ref{algo:OT}. Red line is initial distribution. Blue lines are steps along gradient of Equation \eqref{eq:sparseNeigh}. Pink line is final, converged distribution. $\lambda=10$.
epsilon = 0.1. Max Sinkhorn iterations = 5000. Gradient step size = 0.01. Gradient steps=50.}
\label{fig:low_sparse}
\end{figure}

We first show this method's results on a low dimensional example. For example, let the measurement be \\
$ \nu = (0.2, 0.15, 0, 0, 0, 0.1, 0.15, 0.2, 0.15, 0.1)^T. $ 
With sparsity parameter $\lambda = 10$  the method produces the sparse approximation $(0.35,0,0,0,0,0,0,0.65,0,0)$.
This reflects the fact that $\nu$ has two peaks, one peak centered at position 1 and one peak centered at position 8, and that 35\% of the mass of $\nu$ is situated close to position 1 and 65\% of the mass of $\nu$ is situated close to position 8. Figure \ref{fig:low_sparse} plots $\nu$, the gradient descent steps, and the final result.
With sparsity parameter $\lambda = 100$, the method produce the sparse approximation $(0,0,0,0,0,0,0,1,0,0)$, reflecting the fact that most of the mass of $\nu$ is part of a peak centered at position 8. Figure \ref{fig:high_sparse} plots $\nu$, the gradient descent steps, and the final result.

\begin{figure}[t]
\centerline{\includegraphics[width=.5\textwidth]{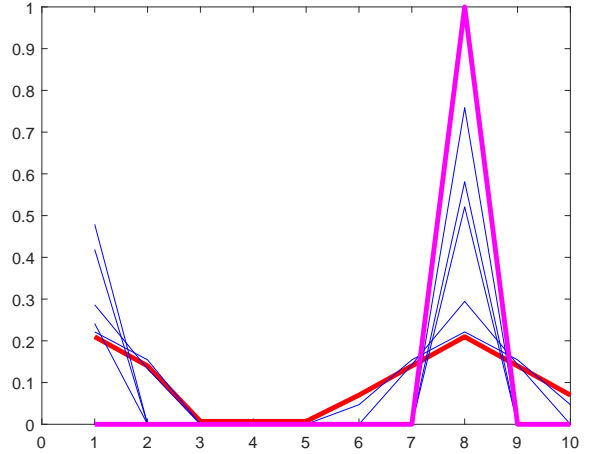}}
\caption{
Plot of super resolution O.T. method Algorithm \ref{algo:OT}. Red line is initial distribution. Blue lines are steps along gradient of Equation \eqref{eq:sparseNeigh}. Pink line is final, converged distribution. $\lambda=100$.
epsilon = 0.1. Max Sinkhorn iterations = 5000. Gradient step size = 0.01. Gradient steps=50.}
\label{fig:high_sparse}
\end{figure}

\section{Star Clustering Application}\label{sec:Star}

\begin{table}[h]
\caption{Confusion matrix of O.T. metod Algorithm \ref{algo:OT} on LEGUS data compared to StarcNet \cite{perez}. Column gives StarcNet classification and row gives Algorithm \ref{algo:OT} classification.}
\begin{center}
\begin{tabular}{c|c|c|}
 & StarcNet Cluster & StarcNet Not Cluster \\
\hline
O.T. Cluster     & 25\% (32) & 13.3\% (17)  \\
\hline
O.T. Not Cluster & 12.5\% (16) & 49.2\% (63)  \\
\hline
\end{tabular}
\label{table:2}
\end{center}
\end{table}

The formation and evolution of star clusters provide insight into the processes governing the birth of stars as well as the dynamical evolution of galaxies \cite{perez}. 
In order to save human hours and get reproducible results, it is of interest to algorithmically detecting star clusters in images of sky patches. Many methods have been proposed to algortihmically detect star clusters, including CLEAN \cite{hogbom}, Multiscale CLEAN \cite{cornwell2008}, IUWT-based CS \cite{li}, decision trees \cite{grasha} and optimal sheaves \cite{robinson}. The state of the art method trains a convolutional neural network (CNN) to classify each sky patch or region in an image as containing a star cluster or not \cite{perez}. These neural networks are notoriously computationally expensive, sensitive to noise, and inflexible to appending or removing data variables. 

We propose using the Wasserstein inverse problem \eqref{eq:sparseNeigh} to detect star cluster locations. 
Our dataset consists of measurements from the Hubble Space Telescope in the survey Treasury Project LEGUS (Legacy ExtraGalactic Ultraviolet Survey) \cite{calzetti}. This data set consists of $32\times 32$ pixel images of star patches. Each image comes in 5 frequency bands (NUV, U, B, V, and I) \cite{calzetti}. We encode each of these in a probability vector $\nu$ where each entry correspond to the intensity of a pixel, normalized to sum to 1. Algorithm~\ref{algo:OT} produces a sparse approximation each image which is classified as a star cluster if it contains just one `peak'. Specifically, the sparse image is made binary (1 and 0) by thresholding at 75\% of the max of the image. Then the number of `peaks' is the number of connected components in the binary image. This is the rank of the $0^{th}$ homology of the binary image, denoted $rank(H_0(V_i^{-1}([0.75 \delta,\ \delta])))$ in Algorithm \ref{algo:OT}. We perform this calculation for each of the 5 frequency bands that were measured. 
Then these 5 predictions vote to produce the final prediction for the image in question. 

Algorithm \ref{algo:OT} can be compared to the method of producing a binary image directly from the source image, without first producing a sparse approximation, and counting the number of connected components in this. The accuracy rate of this naive approach is 46\% with respect to the CNN. Algorithm \ref{algo:OT} increases in accuracy to 74\% with respect to the CNN. The CNN accuracy rate is 86\% with respect to experts, but even experts agree with each other only around 70\%-75\% \cite{adamo, grasha, wei}. Given that experts are the baseline, it is impossible, without overfitting, for a computational model to do better than that. Therefore our method provides a very high performance given that no neural network training, which often takes weeks of compute time, is required. Additionally, the O.T. method is less sensitive to noise than a CNN, see \cite{zou}, which we describe and bound in Theorems \ref{thm:NoiseBound}, \ref{thm1}, and \ref{thm:robust}. Finally, with our method, variables can be simply added and removed where Equation \eqref{eq:sparseNeigh} is quickly recalculated. 

We give the confusion matrix in Table \ref{table:2} from our calculations. We test on 128 random samples. The maximum Sinkhorn iterations is 500. The cost $C_{ij}$ is chosen as the $L^2$-distance between the $i^{th}$ pixel and $j^{th}$ pixel. The $H_0$ threshold is 0.75. The initial gradient descent step size is 0.001. Wasserstein parameter $\epsilon=0.001$. Sparsity parameter $\lambda=1$. When specifying the accuracy rates in the previous paragraph we use the classification results of the CNN in \cite{perez} as the definition of the correct classification. 

\section{Conclusion}

Optimal transportation is more efficient, robust, and flexible than CNNs. We proved that optimal transportation will reconstruct sparse sources and is robust to noise. This is relevant for correcting distortions and noise in imaging which we showed for star cluster detection. Another benefit of a predictive model for star clusters is that it can produce a \emph{policy} that informs where future surveys should look for star clusters \cite{rawson, freeman}. 

\section*{Acknowledgment}
The last author thanks the Knut and Alice Wallenberg Foundation (Grant 2018-0357), BSF (Grant 2020329) and NSF (Grant DMS-1906370).

\end{document}